\newcommand{\td}{\text{d}}
\theoremstyle{plain}
\newtheorem{theorem}{Theorem}[section]
\newtheorem{cor}[theorem]{Corollary}
\newtheorem{proposition}[theorem]{Proposition}
\theoremstyle{definition}
\newtheorem{remark}[theorem]{Remark}
\def\be{\begin{equation}}
\def\ee{\end{equation}}
\def\bea{\begin{eqnarray}}
\def\eea{\end{eqnarray}}
\newcounter{mnotecount}[section]
\renewcommand{\themnotecount}{\thesection.\arabic{mnotecount}}
\newcommand{\mnote}[1]
{\protect{\stepcounter{mnotecount}}$^{\mbox{\footnotesize
$
\bullet$\themnotecount}}$ \marginpar{
\raggedright\tiny\em
$\!\!\!\!\!\!\,\bullet$\themnotecount: #1} }
\begin{document}
\title[]{Rigidity of the extremal Kerr-Newman horizon}
\author{Alex Colling}
\address{Department of Applied Mathematics and Theoretical Physics\\ 
University of Cambridge\\ Wilberforce Road, Cambridge CB3 0WA, UK.}
\email{aec200@cam.ac.uk}
\author{David Katona}
\address{
School of Mathematics and Maxwell Institute for Mathematical Sciences\\ University of Edinburgh\\
King's Buildings, Edinburgh, EH9 3FD, UK.}
\email{d.katona@sms.ed.ac.uk}
\author{James Lucietti}
\address{
School of Mathematics and Maxwell Institute for Mathematical Sciences\\ University of Edinburgh\\
King's Buildings, Edinburgh, EH9 3FD, UK.}
\email{j.lucietti@ed.ac.uk}


\begin{abstract}
We prove that the intrinsic geometry of compact cross-sections of an extremal horizon in four-dimensional Einstein-Maxwell theory must admit a Killing vector field or is static. This implies that any such horizon must be an extremal Kerr-Newman horizon and completes the classification of the associated near-horizon geometries.  The same results  hold with a cosmological constant. 
\end{abstract}

\maketitle

\section{Introduction}

The no-hair theorem states that, under certain global conditions, the Kerr-Newman black hole is the only equilibrium black hole spacetime in Einstein-Maxwell theory~\cite{Chrusciel:2008js}. This celebrated result rests on several remarkable theorems that constrain the topology,  symmetry and geometry of such black hole spacetimes.  In particular, Hawking's rigidity theorem establishes that the event horizon of any stationary (analytic) spacetime must  be a Killing horizon, and furthermore, if the black hole is rotating the spacetime must be axially symmetric~\cite{Hawking:1971vc, Hollands:2006rj, Moncrief:2008mr}. This rigidity theorem was originally proven for non-extremal black holes and has been generalised to include extremal black holes~\cite{Chrusciel:1996bm,Hollands:2008wn}. Interestingly, although the proof exploits extrinsic properties of the horizon, it implies that the intrinsic geometry of the horizon must be axially symmetric.  

It is well known that the Einstein equations restricted to a Killing horizon imply that the intrinsic and extrinsic geometries decouple precisely if the horizon is extremal. The intrinsic geometry of an extremal horizon in an $(n+2)$-dimensional spacetime is described by a Riemannian metric and a vector field on an $n$-dimensional cross-section of the horizon that obey a quasi-Einstein equation. This quasi-Einstein structure is equivalent to an $(n+2)$-dimensional associated spacetime that satisfies the Einstein equations, known as the  near-horizon geometry, which can be obtained as a near-horizon scaling limit of the parent spacetime~\cite{Kunduri:2013gce}. This structure is also equivalent to the intrinsic geometry of an extremal isolated horizon~\cite{Lewandowski:2002ua}.

The geometry induced on an extremal horizon in Einstein-Maxwell theory is described by an $n$-dimensional Riemannian manifold $(M, g)$, a vector field $X\in \mathfrak{X}(M)$, a function $\psi$ and a closed 2-form $B\in \Omega^2(M)$ satisfying the system of equations~\cite{Kunduri:2013gce, Li:2018knr}
\begin{align}
	R_{ab} = \frac{1}{2}X_a X_b - \nabla_{(a}X_{b)} + \lambda g_{ab} + P_{ab}\;, \label{eq_NHG}\\
	(\nabla^a-X^a)B_{ab} = -(\nabla_b-X_b)\psi\;, \label{eq_NH_Maxwell}
\end{align}
where  $\nabla$ is the Levi-Civita connection and $R_{ab}$ is the Ricci tensor of $(M, g)$,  $\lambda \in \mathbb{R}$ is the cosmological constant, and the `source term'
\begin{equation}
	P_{ab}:= 2B_{ac}B_b{}^c + \frac{1}{n}g_{ab}\left(2\psi^2 - B_{cd}B^{cd}\right)\;.  \label{eq_source}
\end{equation}
We will refer to $(M, g)$ together with the data $(X, \psi, B)$ satisfying the above equations as a {\it quasi-Einstein-Maxwell structure} on $M$. In the original $(n+2)$-dimensional spacetime,  the manifold $M$ corresponds to a cross-section of the horizon, $g$ is the induced Riemannian metric on $M$, and the  function $\psi$ and the 2-form $B$ correspond to the electric potential and the Maxwell field induced on $M$, respectively. However, the quasi-Einstein-Maxwell structure may be studied independently of any spacetime interpretation, which is the perspective we take in this paper.

Numerous rigidity and classification results are known for  extremal horizons with a compact cross-section $M$~\cite{Kunduri:2013gce}. For instance, vacuum solutions (so $\psi= B=0$) with $X^\flat$ closed, where $X^\flat$ denotes the $g$-dual of the vector field $X$, must be trivial if $n=2$, or either trivial or locally the product of a circle and an Einstein space if $n > 2$, $\lambda \leq 0$ ~\cite{Chrusciel:2005pa, Bahuaud:2022iao, Wylie:2023pwf,kaminski_extreme_2024} (here trivial means $X$ vanishes identically). Therefore, the classification of static near-horizon geometries, which correspond to $X^\flat$ closed, is largely complete in the vacuum theory. Recently, it has been shown that for vacuum extremal horizons with $X$ non-gradient, the corresponding quasi-Einstein structure implies the existence of a Killing vector field on compact $(M,g)$ for all $n\geq 2$ and $\lambda \in \mathbb{R}$~\cite{Dunajski:2023xrd}. This is an intrinsic version of Hawking's rigidity theorem for extremal horizons. Combining with previous uniqueness theorems for axially symmetric extremal horizons~\cite{Lewandowski:2002ua, Kunduri:2008rs, Hajicek:1974oua}, this implies that the extremal Kerr horizon  is the unique solution on $M=S^2$ (including a possible cosmological constant). The purpose of this paper is to generalise this intrinsic rigidity theorem to Einstein-Maxwell theory.

We will  focus on the $n=2$ dimensional case, corresponding to extremal horizons in four-dimensional Einstein-Maxwell theory. In particular, we will show that any non-gradient quasi-Einstein-Maxwell structure on a compact surface must admit a Killing vector field and therefore,  as we explain below,  complete their classification.  Our main result is the following.
\begin{theorem}\label{main-theorem}
Let $(M, g)$ be a $2$-dimensional compact (without boundary), oriented, Riemannian manifold admitting a quasi-Einstein-Maxwell structure with a non-gradient vector field $X$. Then, there exists a smooth positive function $\Gamma$ such that $K=\Gamma X+ \nabla \Gamma$ is a Killing vector field. Furthermore, the vector field $X$ and the Maxwell data $(\psi, B)$ are invariant under $K$.
\end{theorem}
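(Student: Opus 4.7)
The idea is to extend the vacuum 2D rigidity theorem of \cite{Dunajski:2023xrd} to include the Maxwell field, exploiting the fact that in dimension $n=2$ the Maxwell contribution to \eqref{eq_NHG} is pure trace. Since $\Omega^2(M)$ is one-dimensional pointwise, one may write $B=\phi\,\eta$ for a scalar $\phi$ (with $\eta$ the volume form), and a direct calculation yields $B_{ac} B_b{}^c = \phi^2 g_{ab}$ and $B_{cd}B^{cd}=2\phi^2$, whence $P_{ab}=(\phi^2+\psi^2)\,g_{ab}$. Combining this with the two-dimensional identity $R_{ab}=\tfrac12 R\,g_{ab}$, the quasi-Einstein-Maxwell equation \eqref{eq_NHG} is equivalent to the \emph{trace-free} identity
\[
\nabla_{(a} X_{b)} - \tfrac12(\nabla\cdot X)\, g_{ab} = \tfrac12\bigl(X_a X_b - \tfrac12|X|^2 g_{ab}\bigr),
\]
which is \emph{independent of the Maxwell data}, together with a trace equation that determines the scalar curvature in terms of $X$ and $\Lambda := \lambda + \phi^2 + \psi^2$.

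Given this clean decoupling, the first step is to run the vacuum construction of the Killing vector. Using a Hodge decomposition of $X^\flat$ (whose non-exact part is non-trivial by the non-gradient assumption) together with the trace-free identity above, the argument of \cite{Dunajski:2023xrd} should produce a positive $\Gamma$ such that $K^\flat:=\Gamma X^\flat + d\Gamma$ is co-closed and satisfies the trace-free Killing equation, hence is Killing. Since only the trace-free identity enters this construction, it transfers verbatim from the vacuum setting.

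The second step is invariance. For $X$: writing $X^\flat=\Gamma^{-1}(K^\flat-d\Gamma)$, together with $\mathcal{L}_K g=0$ and the fact that $\Gamma$ is built from $K$-invariant data (such as $|X|^2$ and the conformal factor), one deduces $\mathcal{L}_K \Gamma = 0$ and hence $\mathcal{L}_K X = 0$. For the Maxwell data, apply $\mathcal{L}_K$ to \eqref{eq_NH_Maxwell} and use $\mathcal{L}_K g = \mathcal{L}_K X = 0$ to conclude that $(\mathcal{L}_K \phi,\mathcal{L}_K\psi)$ satisfies the homogeneous linearisation of the Maxwell system. Since $K$ generates a circle action on $M$ with closed orbits, a unique-continuation or maximum-principle argument should then force these Lie derivatives to vanish, giving $\mathcal{L}_K \psi = \mathcal{L}_K B = 0$.

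The main obstacle is the last point: showing that the linearised Maxwell system admits only the trivial solution. This is not automatic and will require genuine use of the geometric structure, especially at the fixed points of $K$ where the orbits degenerate. A possibly cleaner route is to argue that $\phi$ and $\psi$ are determined uniquely (up to a finite-dimensional freedom corresponding to conserved charges) by \eqref{eq_NH_Maxwell} in a form that is manifestly $K$-invariant, thus obviating a direct linear analysis.
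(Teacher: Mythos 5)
Your reduction of the source term to $P_{ab}=(\psi^2+\phi^2)g_{ab}$ is correct and matches the paper's equation \eqref{eq_P_2d}, but the central claim of your first step --- that the vacuum construction of the Killing field ``transfers verbatim'' because only the trace-free part of the quasi-Einstein equation enters --- is false, and this is where the real work of the theorem lies. The vacuum argument rests on a divergence identity obtained by computing $\nabla^a\nabla_{(a}K_{b)}$, and while the contracted Bianchi identity part indeed sees only the trace-free content, the computation also requires commuting derivatives in a third-derivative term $\nabla^a\nabla_b\nabla_a\Gamma$ via $[\nabla_a,\nabla_b]V^a=R_{ab}V^a$, which uses the \emph{full} Ricci tensor. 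The pure-trace Maxwell term therefore injects a genuinely new term $-\rho^2 K^a\nabla_a\Gamma$ (with $\rho^2=\psi^2+\phi^2$) into the identity, which neither vanishes nor is a divergence a priori, so the integration argument breaks down as stated. The paper's key new input is to handle this term using the Maxwell equation: one first proves (Proposition \ref{prop1}, a maximum-principle/harmonic-function argument) that $\rho$ is either identically zero or nowhere zero, which allows the Maxwell equation to be rewritten globally as $X^\flat=-\star\Phi+\td\log\rho$ with $\Phi$ closed; taking the divergence of the resulting expression for $K^\flat$ converts the offending term into $+2|\nabla(\Gamma\rho)|^2$ plus an exact divergence, which joins $\tfrac14|\mathcal{L}_Kg|^2$ on the good-sign side of the identity. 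Your proposal contains neither the dichotomy for $\rho$ nor this reformulation, so the Killing field is not actually produced.

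The same omission undermines your second step. You correctly identify the invariance of the Maxwell data as ``the main obstacle'' and propose a unique-continuation argument for the linearised Maxwell system, which you do not carry out (and which would be delicate at the fixed points of $K$). In the paper no such argument is needed: integrating the corrected identity over $M$ yields \emph{both} $\mathcal{L}_Kg=0$ \emph{and} $\Gamma\rho\equiv\const$ simultaneously, and the latter immediately gives $\mathcal{L}_K\Gamma=0$ (hence $[K,X]=0$), $\mathcal{L}_K\rho=0$, and, via $K^\flat=-\Gamma\star\Phi$, the contraction $\iota_K\Phi=0$, from which $\mathcal{L}_K\psi=\mathcal{L}_K\beta=0$ follows from the definitions of $\rho$ and $\Phi$. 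So the invariance statement is not a separate linear-PDE problem to be solved after the fact; it is an output of the same integral identity that produces the Killing field, and your proposal is missing the mechanism that delivers it.
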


The proof proceeds in an identical manner to the vacuum case~\cite{Dunajski:2023xrd}. The main tool is a remarkable divergence identity that follows from the quasi-Einstein equations.  It turns out that an analogous identity can be established for the 2-dimensional quasi-Einstein-Maxwell equations.   Then, upon choosing $\Gamma$ to be the principal eigenfunction of a certain elliptic operator (which ensures that $K$ is divergence-free), the identity implies that the sum of the $g$-norm of the Lie derivative of the metric $\mathcal{L}_K g$ and of the $g$-norm of a quantity involving the Maxwell field data  is a total divergence. Therefore, by integrating this identity over compact $M$ and using Stokes' theorem, this implies that $K$ satisfies Killing's equation and constrains the Maxwell data  in such a way that one can deduce the rest of the data is also invariant under $K$.

Our theorem has a number of corollaries which allow us to complete the classification of quasi-Einstein-Maxwell structures on surfaces.  Firstly, the existence of a Killing field as in Theorem \ref{main-theorem}, together with the fact that $X$ must have a zero (which was previously shown in the vacuum case~\cite{Jezierski:2009zz,Jezierski:2012lzy, Chrusciel:2017vie, Bahuaud:2023jyq}), implies that $M=S^2$ (see Corollary \ref{cor1}). In particular, this rules out non-gradient quasi-Einstein-Maxwell structures on higher genus surfaces, a fact that had  previously only been proven for the vacuum theory~\cite{Dobkowski-Rylko:2018nti}\footnote{Since the release of this paper solutions on higher genus surfaces have been ruled out for a more general class of equations in~\cite{kaminski_extreme_2024}.}. Secondly, it has been known for some time that under the assumption of an axial Killing field that preserves the vector field $X$ and Maxwell field data, the only non-gradient solution on $M=S^2$ corresponds to an extremal Kerr-Newman horizon, possibly with a cosmological constant~\cite{Lewandowski:2002ua, Kunduri:2008rs, Hajicek:1974oua}. Therefore, in combination with Theorem \ref{main-theorem} we deduce the following.
\begin{theorem}
\label{main-theorem-cor}
	Any quasi-Einstein-Maxwell structure as in Theorem \ref{main-theorem} is an extremal Kerr-Newman horizon, possibly with a cosmological constant.
\end{theorem}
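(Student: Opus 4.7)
The plan is to combine Theorem~\ref{main-theorem} with the topological consequence recorded in Corollary~\ref{cor1} and the classification of axisymmetric non-gradient quasi-Einstein-Maxwell structures on $S^2$ established in \cite{Lewandowski:2002ua, Kunduri:2008rs, Hajicek:1974oua}. To begin, I would apply Theorem~\ref{main-theorem} to the given non-gradient structure, yielding a smooth positive function $\Gamma$ and a nontrivial Killing vector field $K=\Gamma X+\nabla\Gamma$ under whose flow the entire data $(g,X,\psi,B)$ is invariant, since $\mathcal{L}_K X=\mathcal{L}_K \psi=\mathcal{L}_K B=0$.

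Next, I would invoke Corollary~\ref{cor1} to deduce that $M\cong S^2$. This rests on the fact that $X$ must vanish somewhere on $M$, which is known in the vacuum case from \cite{Jezierski:2009zz,Jezierski:2012lzy, Chrusciel:2017vie, Bahuaud:2023jyq} and needs to be extended to the Einstein-Maxwell setting---plausibly by re-examining the scalar integral identity obtained from the trace of \eqref{eq_NHG} and exploiting the sign properties of the source term \eqref{eq_source}. Since $K$ is a nontrivial Killing field on the compact surface $(S^2,g)$, the closure of its flow in the isometry group is a compact abelian Lie subgroup, which after rescaling realises $K$ as the generator of an axial $U(1)$-action on $S^2$ fixing two antipodal points and preserving $(X,\psi,B)$.

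Finally, one applies the classification of \cite{Lewandowski:2002ua, Kunduri:2008rs, Hajicek:1974oua} to this axisymmetric non-gradient solution, which asserts that the only such quasi-Einstein-Maxwell structures on $S^2$ are those induced on the horizon of an extremal Kerr-Newman black hole, together with their $\lambda\neq 0$ generalisations when a cosmological constant is present. This identifies the given data with an extremal Kerr-Newman horizon and completes the proof.

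The main obstacle is not the assembly itself, which once Theorem~\ref{main-theorem} is in hand becomes largely a matter of citation, but rather in verifying the two imported ingredients: that $X$ vanishes somewhere on $M$ in the Einstein-Maxwell case---needed so that Corollary~\ref{cor1} applies---and that the axisymmetric classification of \cite{Lewandowski:2002ua, Kunduri:2008rs, Hajicek:1974oua} genuinely covers the full Einstein-Maxwell-$\lambda$ system under the present conventions. The first requires a careful adaptation of the vacuum arguments with the additional contributions from $\psi$ and $B$ controlled by \eqref{eq_source}; the second is a bookkeeping check against the cited references.
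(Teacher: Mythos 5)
Your proposal is correct and follows essentially the same route as the paper: Theorem \ref{main-theorem} supplies the Killing field preserving all the data, Corollary \ref{cor1} forces $M=S^2$, and the axisymmetric uniqueness results of \cite{Lewandowski:2002ua, Kunduri:2008rs, Hajicek:1974oua} then identify the solution as the extremal Kerr-Newman horizon. The one ingredient you flag as needing verification---that $X$ has a zero in the Einstein-Maxwell case---is handled in the paper's proof of Corollary \ref{cor1} not by a sign argument but by the observation that in two dimensions the source term $P_{ab}=\rho^2 g_{ab}$ is pure trace and so drops out of the identity $R_{ab}=\tfrac{1}{2}Rg_{ab}$ contracted twice with $X$, leaving the vacuum argument unchanged.
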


The above result is complementary to the classification of static near-horizon geometries, which as mentioned above correspond to $X^\flat$ closed.  In this case it has been shown that there are no non-trivial near-horizon geometries, that is, any quasi-Einstein-Maxwell structure on a surface must be trivial in the sense that $X=0$ and $(M,g)$ is Einstein~\cite{Chrusciel:2006pc, Kunduri:2008tk}.\footnote{The proof for $\lambda>0$ was not given explicitly in~\cite{Kunduri:2008tk}, but it can be generalised to cover this case~\cite{kaminski_extreme_2024,katona_classification_2024}.}  Taken with our Theorem \ref{main-theorem-cor}, this completes the classification of near-horizon geometries with compact cross-sections $M$ in four-dimensional Einstein-Maxwell theory including an arbitrary cosmological constant.

Generalising Theorem \ref{main-theorem} to higher-dimensional Einstein-Maxwell theory (possibly coupled to a Chern-Simons term) is an interesting problem, which will be the subject of future work. However, we note that even the classification of $n=3$ quasi-Einstein-Maxwell(-Chern-Simons) structures with two commuting axial Killing fields remains an open problem and multiple solutions are known in this case~\cite{Kunduri:2013gce}.\\

\noindent {\bf Acknowledgements.} AC acknowledges the support of the Cambridge International Scholarship. DK is supported by a School of Mathematics Studentship from the University of Edinburgh. We would like to thank Maciej Dunajski for useful discussions.\\

\section{Rigidity of extremal horizons in Einstein-Maxwell theory}

In this paper we will consider extremal horizons in four-dimensional Einstein-Maxwell theory. Therefore, we will assume $(M, g)$ is a $2$-dimensional (oriented) Riemannian manifold that satisfies the quasi-Einstein-Maxwell equations \eqref{eq_NHG}, \eqref{eq_NH_Maxwell}, \eqref{eq_source}.  Thus, we may define a function $\beta:= \star B$ where $\star$ is the Hodge dual operator on $(M, g)$. Then, the source term \eqref{eq_source} simplifies to
\begin{equation}
	P_{ab} = (\psi^2 + \beta^2)g_{ab}\;,  \label{eq_P_2d}
\end{equation}
and the Maxwell-equation (\ref{eq_NH_Maxwell}) becomes
\begin{equation}
	\star(\td\beta - \beta X^\flat) = \td\psi -X^\flat \psi\; .  \label{eq_NH_Maxwell_2}
\end{equation}
Adding (\ref{eq_NH_Maxwell_2}) multiplied  by $\psi$ to the Hodge-dual  of (\ref{eq_NH_Maxwell_2}) multiplied by $\beta$, yields
\begin{equation}
	X^\flat (\psi^2 + \beta^2) =  \star(\beta\td \psi-\psi\td\beta) +\frac{1}{2}\td (\psi^2 + \beta^2)\;. \label{eq_Maxwell_X}
\end{equation}
Thus, if we define the continuous function 
\begin{equation}
		\rho:= (\psi^2+\beta^2)^{1/2}  \; ,\label{eq_rhodef}
	\end{equation} 
	we deduce that on the open set $\widetilde M \subseteq M$  where $\rho>0$ we can write
	\begin{equation}
	X^\flat = -\star  \Phi + \td\log\rho  \; , \label{eq_Maxwell_X_2}
	\end{equation}
	where $\Phi$ is a  closed 1-form  on $\widetilde M$ defined by
	\begin{equation}
	\Phi := \frac{\psi\td\beta-\beta\td \psi}{\psi^2+\beta^2}  \; .\label{eq_Phidef}
	\end{equation}
	In fact, on $\widetilde M$ the Maxwell equation \eqref{eq_NH_Maxwell_2} is equivalent to \eqref{eq_Maxwell_X_2} since the operator $\psi+\beta \star$ acting on 1-forms is invertible.
If $\widetilde M$ is simply connected, then by the Poincar\'e lemma $\Phi = \td \phi$ for some smooth function $\phi$ on $\widetilde M$, and we can invert \eqref{eq_rhodef}, \eqref{eq_Phidef} to obtain
	\begin{equation}
	\psi = \rho \cos\phi\;, \qquad \beta = \rho\sin\phi\; , \label{eq_phidef}
\end{equation}
where we have fixed an additive integration constant for $\phi$.

We are now ready to establish our first result.
\begin{proposition}  \label{prop1}
Consider a quasi-Einstein-Maxwell structure on a compact (no boundary), oriented,  surface $M$. If $\rho$ does not vanish identically on $M$, then $\rho$ is nowhere vanishing on $M$. In this case, $\Phi$ defined by \eqref{eq_Phidef} is a globally defined closed 1-form on $M$ and \eqref{eq_Maxwell_X_2} holds everywhere on $M$.
\end{proposition}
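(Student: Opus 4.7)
The plan is to reduce the Maxwell equation to a Cauchy--Riemann-type equation for $f := \psi + i\beta$, extract enough local information about the zeros of $f$ to treat them as isolated vortex-like singularities of $\log\rho$, and then obtain a contradiction via a single integration by parts on $M$.

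The first step is to observe that $\rho = |f|$, and that in local isothermal coordinates $z = x + iy$ the Maxwell equation \eqref{eq_NH_Maxwell_2} is equivalent to the first-order elliptic equation $\partial f = \xi f$, where $\xi := \tfrac12(X_1 - iX_2)$ is the $(1,0)$-coefficient of $X^\flat$. The Bers--Vekua similarity principle then represents $f$ locally as $f = e^{\phi} H$ with $\phi$ smooth and $H$ anti-holomorphic; in particular, if $f \not\equiv 0$ then its zeros are isolated with finite positive integer order, and near any zero $p$ one has $f(z, \bar z) = c\,\bar z^n + O(|z|^{n+1})$ with $c \neq 0$ and $n \geq 1$. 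Assuming $\rho \not\equiv 0$, the zero set of $\rho$ on compact $M$ is therefore a finite collection of points $p_1, \dots, p_k$ with orders $n_1, \dots, n_k \geq 1$.

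The second step is to establish the identity $\Delta \log\rho = \nabla \cdot X$ on $\widetilde M$. Applying $\star$ to \eqref{eq_Maxwell_X_2} and using $\star^2 = -1$ on 1-forms yields $\star X^\flat = \Phi + \star d\log\rho$; then taking $d$ and using $d\Phi = 0$, $d \star X^\flat = (\nabla \cdot X)\,\mathrm{vol}$, and $d \star d \log\rho = (\Delta \log\rho)\,\mathrm{vol}$ gives the identity.

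To conclude, I would integrate this identity over $M_\epsilon := M \setminus \bigcup_i B_\epsilon(p_i)$ and apply the divergence theorem. The local expansion of $f$ gives $\log\rho = n_i \log|z| + h_i$ near $p_i$ with $dh_i$ bounded, so $\oint_{\partial B_\epsilon(p_i)} \star d\log\rho \to 2\pi n_i$ as $\epsilon \to 0$ (using $\star d\log|z| = d\theta$). Hence
\[
\int_{M_\epsilon} \nabla \cdot X \, \mathrm{vol} = \int_{M_\epsilon} \Delta \log\rho \, \mathrm{vol} = -\sum_i \oint_{\partial B_\epsilon(p_i)} \star d\log\rho,
\]
whose left-hand side tends to $\int_M \nabla \cdot X\,\mathrm{vol} = 0$ by Stokes on compact $M$ without boundary, while the right-hand side tends to $-2\pi\sum_i n_i$. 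Since each $n_i \geq 1$, this is impossible, so $\rho$ vanishes nowhere on $M$; the remaining conclusions about $\Phi$ and \eqref{eq_Maxwell_X_2} then follow immediately. The main obstacle is the first step: controlling the zero set of $f$ via the Bers--Vekua principle. After that, the divergence-theorem argument is short and transparent.
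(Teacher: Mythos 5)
Your proof is correct, but it takes a genuinely different route from the one in the paper. Both arguments pivot on the same identity $\Delta\log\rho=\nabla_aX^a$ on $\widetilde M=\{\rho>0\}$, obtained by taking $\td\star$ of \eqref{eq_Maxwell_X_2}; they differ in how the singularity of $\log\rho$ at a zero of $\rho$ is exploited. The paper solves the global Poisson equation $\Delta f=\nabla_aX^a$ on all of $M$ (solvable because the source integrates to zero), so that $h=\log\rho-f$ is harmonic on $\widetilde M$ and tends to $-\infty$ at $\partial\widetilde M$; the maximum principle applied to the exhaustion $M_\epsilon=\{\rho\ge\epsilon\}$ then forces $\partial\widetilde M=\emptyset$, with no structural information about the zero set required. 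You instead complexify to $F=\psi+i\beta$, invoke the Bers--Vekua similarity principle to show the zeros are isolated of positive integer order $n_i$, and then eliminate them by the flux computation $0=\int_M\nabla_aX^a\,\mathrm{vol}=\lim_{\epsilon\to0}\int_{M_\epsilon}\Delta\log\rho\,\mathrm{vol}=-2\pi\sum_i n_i$. This is essentially the complex-analytic alternative the paper itself records in Remark \ref{rem_alt}, where the vanishes-identically-or-nowhere dichotomy is attributed to \cite[Lemma 1]{Dobkowski-Rylko:2018nti}; your degree-counting finish is a clean, purely Stokes-theoretic replacement for the harmonic-function argument once the local normal form $F\sim c\,\bar z^{\,n}$ is in hand. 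The trade-off is that the similarity principle is the heavier local tool, and one detail deserves care: to justify that $\td h_i$ is bounded you should note that, since $X$ is smooth, the similarity factor $e^{s}$ can be taken smooth (solve $\bar\partial s=\xi$ locally with smooth data), so $\log\rho-n_i\log|z|$ is genuinely smooth near $p_i$; the general similarity principle only gives H\"older regularity of $s$, for which the vanishing of $\oint_{\partial B_\epsilon}\star\,\td h_i$ would need a separate estimate. The convention issues you flag ($\partial$ versus $\bar\partial$, $z^{n}$ versus $\bar z^{\,n}$) are immaterial: all that matters is $\log|F|=n\log|z|+O(1)$ with $n\ge1$, so the point masses of $\Delta\log\rho$ all carry the same sign and cannot cancel against $\int_M\nabla_aX^a\,\mathrm{vol}=0$.
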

\begin{proof}
	Let $\widetilde M \subseteq M$ be the open set on which $\rho>0$. By taking $\td \star$ of \eqref{eq_Maxwell_X_2}  we deduce that $\td \star \td \log \rho = \td \star X^\flat$ on $\widetilde M$. On the other hand, by standard results the Poisson equation $\td \star \td f=  \td \star X^\flat$ on a compact $(M,g)$ (with a fixed $X$) must admit a unique solution $f \in C^\infty(M)$ (up to an additive constant). Therefore, $h :=\log \rho - f$ is a harmonic function on $\widetilde M$. 
  
	Now assume for contradiction that $\partial \widetilde M \neq \emptyset$.  For any $p\in \partial \widetilde M$ we have  $\lim_{x\to p} \rho(x) = 0$ and hence $\lim_{x\to p} h(x) = -\infty$, which contradicts the maximum principle for harmonic functions. This can be seen as follows. For $\epsilon > 0$ let $M_\epsilon:=\{x\in M: \rho (x) \ge \epsilon\} \subset \widetilde M$. $M_\epsilon$ is compact, since it is closed (by the continuity of $\rho$), and it is a subset of a compact manifold $M$. Let us choose a sufficiently small $\epsilon>0$ such that $M_\epsilon$ is non-empty. Then, by the maximum principle, $h$ takes its maximum $h(q)$ at some $q\in\partial M_\epsilon$ (which is non-empty by our assumption that $\rho$ has a zero). Now, consider $M_\delta$ for some
	\begin{equation}
		0<\delta<\epsilon \frac{e^{\min_{M}f}}{ e^{\max_{M} f}}\leq \epsilon \;,
	\end{equation}
	where the maximum and minimum of $f$ must exist by compactness of $M$ (again, $M_\delta$ and its boundary are non-empty for any such $\delta$, by the assumption that $\rho$ has a zero on $M$). By  the maximum principle the maximum $h(q')$ over $M_\delta$ occurs at some boundary point $q'\in\partial M_\delta$. Then, since $M_\delta \supset M_\epsilon$, we must have $h(q')\ge h(q)$. On the other hand, by construction we have
	\begin{equation}
		h(q') = \log\delta -f(q') \le \log\delta - \min_M f < \log\epsilon -\max_M f \le \log\epsilon - f(q) = h(q)\;,
	\end{equation}
	which is a contradiction.  Therefore, $\partial \widetilde M = \emptyset$  and hence $\widetilde M= M$.
\end{proof}

The case where $\rho$, and hence  $\psi, \beta$, vanish identically, corresponds to the vacuum theory which has already been solved~\cite{Dunajski:2023xrd}. Therefore, in the following we assume that $\rho >0$ on $M$. It is worth noting that the results we have derived so far have been previously derived in the context of isolated horizons if $M=S^2$~\cite{Lewandowski:2002ua}, in which case the closed 1-form $\Phi$ must be exact and  \eqref{eq_Maxwell_X_2} is the Hodge decomposition. (For general surfaces, the  result that a non-vanishing $\rho$ must be nowhere vanishing can be deduced from \cite{Dobkowski-Rylko:2018nti}, see Remark \ref{rem_alt}).

We will now follow the method used in the vacuum theory~\cite{Dunajski:2023xrd}. Thus, for any smooth positive function $\Gamma$ we define a vector field $K\in \mathfrak{X}(M)$ by  
\begin{equation}
	K_a: = \Gamma X_a + \nabla_a \Gamma\label{eq_Kdef}  \; .
\end{equation}
This definition is inspired by the fact that the axial Killing field for the extremal Kerr-Newman horizon takes this form for some choice of $\Gamma$ (as in the vacuum case)~\cite{Kunduri:2008tk}.    The equivalent form of the Maxwell equation  \eqref{eq_Maxwell_X_2} now reads
\begin{equation}
		K^\flat = -\Gamma\star \Phi + \rho^{-1}\td(\Gamma\rho)\;.\label{eq_Maxwell_K_sol}
\end{equation}
We now turn to the quasi-Einstein equation \eqref{eq_NHG} and \eqref{eq_P_2d}, which in the new variables  becomes
\begin{equation}
	R_{ab} = \frac{K_aK_b}{2\Gamma^2}  - \frac{(\nabla_a\Gamma)(\nabla_b\Gamma)}{2\Gamma^2} - \frac{1}{\Gamma}\nabla_{(a}K_{b)} + \frac{1}{\Gamma}\nabla_a\nabla_b\Gamma + \lambda g_{ab} + \rho^2g_{ab}\;. \label{eq_NHG_rho}
\end{equation}
This is the same equation as for the vacuum case, except for the last term.  Remarkably, we find that a divergence identity analogous to the vacuum case can be still established.

\begin{proposition}
\label{prop2}
Consider a quasi-Einstein-Maxwell structure on a surface as in Proposition \ref{prop1} and assume $\rho > 0$ on $M$. Then, for any strictly positive smooth function $\Gamma$, we have the identity
\begin{align}
	\frac{1}{4}|\mathcal{L}_Kg|^2 + 2 | \nabla (\Gamma \rho) |^2 &= \nabla^a\left(K^b\nabla_{(a}K_{b)}- \frac{1}{2}K_a\Delta\Gamma-\frac{1}{2}K_a\nabla_bK^b - \lambda\Gamma K_a + \Gamma \rho \nabla_a(\Gamma \rho)\right)\nonumber\\
	&\qquad +\nabla_bK^b\left(-\frac{|K|^2}{2\Gamma}+\frac{1}{2}\Delta\Gamma+\frac{1}{2}\nabla_aK^a+\frac{1}{2\Gamma}K^a\nabla_a\Gamma+\lambda\Gamma - \Gamma \rho^2 \right) \; , \label{eq_main_id}
\end{align}
where $\Delta :=\nabla^a\nabla_a$ is the Laplacian and $| \cdot |$ denotes the $g$-norm.
\end{proposition}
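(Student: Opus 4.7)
The plan is to verify the identity by direct calculation, following the template of the vacuum argument in \cite{Dunajski:2023xrd} and handling the new Maxwell contributions using the Hodge-like form of the Maxwell equation in \eqref{eq_Maxwell_K_sol}. Rearranging \eqref{eq_NHG_rho} gives an explicit expression
$$S_{ab} := \nabla_{(a}K_{b)} = \frac{K_a K_b - \nabla_a\Gamma\,\nabla_b\Gamma}{2\Gamma} + \nabla_a\nabla_b\Gamma - \Gamma R_{ab} + \Gamma(\lambda+\rho^2)g_{ab},$$
so that $\tfrac{1}{4}|\mathcal{L}_K g|^2 = |S|^2 = \nabla^a(K^b S_{ab}) - K^b\nabla^a S_{ab}$. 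The first term already matches the leading piece of the divergence on the right-hand side, and the remaining task is to organise $-K^b\nabla^a S_{ab}$ into the other divergence pieces, the $\nabla\cdot K$ multiplier bracket, and the residual $2|\nabla(\Gamma\rho)|^2$.

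The vacuum-type terms in $\nabla^a S_{ab}$ I would handle exactly as in \cite{Dunajski:2023xrd}. The key tools are the two-dimensional identity $R_{ab} = \tfrac{1}{2}R\, g_{ab}$ together with the contracted Bianchi identity to simplify $\nabla^a(\Gamma R_{ab})$; the commutator $\Delta\nabla_b\Gamma = \nabla_b\Delta\Gamma + R_{bc}\nabla^c\Gamma$; and the trace of \eqref{eq_NHG_rho}, which eliminates the scalar curvature $R$ in favour of $|K|^2$, $|\nabla\Gamma|^2$, $\Delta\Gamma$, $\nabla\cdot K$, $\lambda$ and $\rho^2$. Contracting with $K^b$ and grouping terms reproduces the divergence pieces $-\tfrac{1}{2}K_a\Delta\Gamma$, $-\tfrac{1}{2}K_a\nabla_b K^b$, $-\lambda\Gamma K_a$ and the bracket multiplying $\nabla\cdot K$ in \eqref{eq_main_id}.

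The genuinely new Maxwell contribution to $-K^b\nabla^a S_{ab}$ is $-K^b\nabla_b(\Gamma\rho^2)$. Substituting $K^\flat = -\Gamma\star\Phi + \rho^{-1}\td(\Gamma\rho)$ from \eqref{eq_Maxwell_K_sol} and expanding $\nabla_b(\Gamma\rho^2) = \rho\nabla_b(\Gamma\rho) + \Gamma\rho\nabla_b\rho$, the gradient piece of $K^\flat$ produces $|\nabla(\Gamma\rho)|^2$ (plus a mixed term that can be reorganised as a divergence), while the $\star\Phi$ piece is itself a pure divergence because $\star\Phi$ is coclosed on a two-dimensional Riemannian manifold (since $d\Phi=0$ and $\star\star = -1$ on 1-forms). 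Combining with the divergence $\nabla^a(\Gamma\rho\nabla_a(\Gamma\rho)) = |\nabla(\Gamma\rho)|^2 + \Gamma\rho\Delta(\Gamma\rho)$ coming from the last term inside the right-hand side of \eqref{eq_main_id} produces the full $2|\nabla(\Gamma\rho)|^2$ on the left, while the leftover $\Gamma\rho\Delta(\Gamma\rho)$ is absorbed into the $\nabla\cdot K$ multiplier via the divergence of \eqref{eq_Maxwell_K_sol}. The main obstacle is the algebraic bookkeeping required to verify that all cross-terms balance; conceptually the only new ingredient beyond the vacuum proof is \eqref{eq_Maxwell_K_sol}, and the fact that all $\rho$-dependent pieces vanish in the limit $\rho\to 0$ provides a strong consistency check against the known vacuum identity.
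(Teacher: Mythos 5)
Your overall strategy --- writing $\tfrac14|\mathcal{L}_Kg|^2 = \nabla^a(K^bS_{ab}) - K^b\nabla^aS_{ab}$, running the vacuum computation of \cite{Dunajski:2023xrd}, and using \eqref{eq_Maxwell_K_sol} to process the new $\rho$-dependent terms --- is exactly the paper's. However, two of your intermediate claims are wrong as stated and would prevent the bookkeeping from closing if followed literally. First, the ``genuinely new Maxwell contribution'' is not $-K^b\nabla_b(\Gamma\rho^2)$. In two dimensions $\rho^2 g_{ab}$ is pure trace, so it drops out of the Einstein tensor entirely: the $\nabla_b(\Gamma\rho^2)$ produced by the $\Gamma(\lambda+\rho^2)g_{ab}$ term of $S_{ab}$ is cancelled by the $\rho^2$ part of $-\tfrac12\nabla_b(\Gamma R)$ once $R$ is eliminated via the trace of \eqref{eq_NHG_rho} (which contains $2\rho^2$) --- a step your own outline already performs, so your accounting is internally inconsistent. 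The only surviving entry point of the Maxwell field is the Ricci term in the commutator $\nabla^a\nabla_b\nabla_a\Gamma = \nabla_b\Delta\Gamma + R_{bc}\nabla^c\Gamma$, which after substituting \eqref{eq_NHG_rho} contributes $\rho^2\nabla_b\Gamma$; contracting with $-K^b$, the net new term is $-\rho^2K^a\nabla_a\Gamma$, which differs from your candidate by $-2\Gamma\rho K^a\nabla_a\rho\neq 0$.

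Second, the claim that the $\star\Phi$ piece is a pure divergence because $\star\Phi$ is coclosed fails: the relevant term is $(\star\Phi)^b$ contracted with $\Gamma\nabla_b(\Gamma\rho^2)$ (or with $\Gamma\rho^2\nabla_b\Gamma$ in the corrected version), and the extra factor of $\Gamma$ means it is not of the form $(\star\Phi)^b\nabla_b f=\nabla^b(f(\star\Phi)_b)$. Note also that no $\Phi$-dependent term survives in \eqref{eq_main_id}, so $\star\Phi$ must be eliminated rather than merely integrated away. The paper's route sidesteps both issues: it takes the divergence of \eqref{eq_Maxwell_K_sol}, uses coclosedness of $\star\Phi$ there, and substitutes \eqref{eq_Maxwell_K_sol} back in to remove $\star\Phi$, obtaining $-\rho^2K^a\nabla_a\Gamma = -\Gamma\rho^2\nabla_aK^a + \nabla^a\bigl(\Gamma\rho\nabla_a(\Gamma\rho)\bigr) - 2|\nabla(\Gamma\rho)|^2$, which directly supplies the $-\Gamma\rho^2$ in the $\nabla_bK^b$ bracket, the $\Gamma\rho\nabla_a(\Gamma\rho)$ in the divergence, and the $2|\nabla(\Gamma\rho)|^2$ on the left-hand side. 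With these two corrections your plan coincides with the paper's proof.
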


\begin{proof}
We follow the derivation of the divergence identity in the vacuum theory~\cite{Dunajski:2023xrd}. This proceeds by writing
\begin{equation}
\frac{1}{4}|\mathcal{L}_Kg|^2 = \nabla_{(a} K_{b)} \nabla^a K^b= \nabla^a(K^b\nabla_{(a}K_{b)}) - K^b \nabla^a \nabla_{(a} K_{b)}  \; .
\end{equation}
The second term is computed from the contracted Bianchi identity $\nabla^a(R_{ab}-\tfrac{1}{2} R g_{ab})=0$ applied to the quasi-Einstein equation  (\ref{eq_NHG_rho}), which gives an expression for $ \nabla^a \nabla_{(a} K_{b)}$. The new term $\rho^2 g_{ab}$ in (\ref{eq_NHG_rho}) is pure trace, so this does not actually contribute to the Einstein tensor $R_{ab}-\frac{1}{2}R g_{ab}$ and therefore does not change this part of the calculation.   

The derivation then proceeds as in the vacuum case.
It turns out that the quasi-Einstein equation (\ref{eq_NHG_rho}) is only used in one other place. Namely, the expression for $ \nabla^a \nabla_{(a} K_{b)}$ includes a term with three derivatives of $\Gamma$, that is $\nabla^a\nabla_{(a}K_{b)} = \nabla^a\nabla_b\nabla_a\Gamma + ...$, and commuting  the derivatives in this term using $[\nabla_a, \nabla_b]V^a= R_{ab}V^a$ and (\ref{eq_NHG_rho}) again,  introduces an extra term  $\rho^2 \nabla_b\Gamma$. Thus, after contracting with $-K^b$,  the divergence identity of~\cite{Dunajski:2023xrd} now  becomes 
\begin{align}
	\frac{1}{4}|\mathcal{L}_Kg|^2 &= \nabla^a\left(K^b\nabla_{(a}K_{b)}- \frac{1}{2}K_a\Delta\Gamma-\frac{1}{2}K_a\nabla_bK^b - \lambda\Gamma K_a\right)\nonumber\\
	&\qquad +\nabla_bK^b\left(-\frac{|K|^2}{2\Gamma}+\frac{1}{2}\Delta\Gamma+\frac{1}{2}\nabla_aK^a+\frac{1}{2\Gamma}K^a\nabla_a\Gamma+\lambda\Gamma\right) -\rho^2K^a\nabla_a\Gamma\; . \label{eq_div_Grho}
\end{align}
The last term is the only new term compared to the vacuum theory.

We  can rewrite the term $-\rho^2K^a\nabla_a\Gamma$ using the Maxwell equation. The divergence of \eqref{eq_Maxwell_K_sol}  gives
\begin{equation}
\Gamma \rho^2 \nabla_a K^a= \rho^2 K^a\nabla_a\Gamma+ \Gamma \rho \Delta (\Gamma \rho)- | \nabla(\Gamma \rho) |^2  \; ,
\end{equation}
where we have used  \eqref{eq_Maxwell_K_sol} again to eliminate $\Phi$.  Thus, we can write
\begin{equation}
-\rho^2 K^a\nabla_a\Gamma= -\Gamma \rho^2 \nabla_a K^a+ \nabla^a(\Gamma \rho \nabla_a(\Gamma \rho))- 2 | \nabla(\Gamma \rho)|^2  \label{eq_extraterm}
\end{equation}
and substituting this into \eqref{eq_div_Grho} gives the claimed identity.
\end{proof}

We are now ready to prove our main theorem.  This together with the corresponding theorem in the vacuum case establishes Theorem \ref{main-theorem}.

\begin{theorem}\label{main-theorem-2}
Let $(M, g)$ be a $2$-dimensional compact (without boundary), oriented, Riemannian manifold admitting a quasi-Einstein-Maxwell structure with a non-gradient vector field $X$. Suppose $\rho$ does not vanish identically on $M$. Then, there exists a smooth positive function $\Gamma$ such that $K=\Gamma X+ \nabla \Gamma$ is a Killing vector field. Furthermore, $[K,X]=0$, $\mathcal{L}_K\psi=0$ and $\mathcal{L}_K \beta=0$.
\end{theorem}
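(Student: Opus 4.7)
The strategy mirrors the vacuum argument of \cite{Dunajski:2023xrd}: choose $\Gamma$ so that $\nabla_b K^b = 0$ annihilates the second line of Proposition \ref{prop2}, integrate the resulting identity over compact $M$ and apply Stokes' theorem to extract $\mathcal{L}_K g = 0$ and $\td(\Gamma\rho) = 0$; then use the reduced form of \eqref{eq_Maxwell_K_sol} together with the two-dimensional orthogonality $\langle \omega, \star\omega\rangle = 0$ of a 1-form with its Hodge dual to obtain the Maxwell and $X$-invariance.

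For the choice of $\Gamma$, introduce the linear elliptic operator $Lf := \Delta f + X^a \nabla_a f + (\nabla_a X^a)\, f$, which by direct computation from \eqref{eq_Kdef} satisfies $L\Gamma = \nabla_a K^a$. Its formal $L^2$-adjoint is $L^* f = \Delta f - X^a \nabla_a f$, which annihilates the constants. Hence $0$ is the principal eigenvalue of $L^*$ (with positive constant eigenfunction), and therefore also of $L$ by Krein-Rutman theory for second-order elliptic operators on compact manifolds, with a smooth positive principal eigenfunction $\Gamma > 0$ satisfying $\nabla \cdot K = 0$. Substituting this $\Gamma$ into Proposition \ref{prop2} reduces the identity to
\[
\tfrac{1}{4}|\mathcal{L}_K g|^2 + 2|\nabla(\Gamma\rho)|^2 = \nabla^a V_a
\]
for an explicit vector field $V_a$. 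Integrating over the compact boundaryless $M$ the right-hand side vanishes by Stokes' theorem, and non-negativity of the left forces $\mathcal{L}_K g = 0$ (so $K$ is Killing) and $\Gamma\rho \equiv c$ for some positive constant.

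With $\td(\Gamma\rho) = 0$, equation \eqref{eq_Maxwell_K_sol} collapses to $K^\flat = -\Gamma \star \Phi$. In two dimensions any 1-form is pointwise orthogonal to its Hodge dual, so immediately $\Phi(K) = -\Gamma\langle\Phi,\star\Phi\rangle = 0$ and $K^a \nabla_a \Gamma = -\Gamma\langle \td\Gamma,\star\Phi\rangle = 0$ (the latter being equivalent to $\nabla\cdot K = 0$ rewritten using $K^\flat = -\Gamma\star\Phi$). Expanding these via \eqref{eq_Phidef} and $\Gamma\rho = c$ respectively gives
\[
\psi\, \mathcal{L}_K\beta - \beta\, \mathcal{L}_K\psi = 0\,, \qquad \psi\, \mathcal{L}_K\psi + \beta\, \mathcal{L}_K\beta = 0\,,
\]
a linear system for $(\mathcal{L}_K\psi, \mathcal{L}_K\beta)$ with determinant $-\rho^2 < 0$; hence $\mathcal{L}_K\psi = \mathcal{L}_K\beta = 0$. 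Finally, $[K,X] = 0$ follows by writing $X = (K - \nabla\Gamma)/\Gamma$ and using $\mathcal{L}_K K = 0$, $\mathcal{L}_K \Gamma = 0$, and $\mathcal{L}_K\nabla\Gamma = \nabla\mathcal{L}_K\Gamma = 0$, the last equality valid because $K$ is Killing.

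The main anticipated obstacle is producing a smooth positive principal eigenfunction $\Gamma$ for the non-self-adjoint operator $L$, which requires a maximum-principle / Krein-Rutman argument. Once $\Gamma$ is secured the divergence identity \eqref{eq_main_id} of Proposition \ref{prop2} does the heavy lifting, and the Maxwell invariance reduces to the algebraic manoeuvre above which exploits the special two-dimensional orthogonality of a 1-form with its Hodge dual.
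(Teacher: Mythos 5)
Your proposal is correct and follows essentially the same route as the paper: the paper obtains the divergence-free $K$ by citing Lemma 2.2 of the vacuum paper (the same principal-eigenfunction construction for the non-self-adjoint operator $L$ that you spell out), integrates the reduced identity of Proposition \ref{prop2} exactly as you do, and then extracts $K^a\nabla_a\Gamma=0$ from equation \eqref{eq_extraterm} rather than from the pointwise orthogonality $\langle\Phi,\star\Phi\rangle=0$, with the final linear-system argument for $\mathcal{L}_K\psi=\mathcal{L}_K\beta=0$ matching the paper's (terser) deduction. The only details worth adding are the explicit invocation of Proposition \ref{prop1} to upgrade ``$\rho$ not identically zero'' to ``$\rho>0$ everywhere'' before applying Proposition \ref{prop2}, and the observation that $K$ cannot vanish identically since $X$ is non-gradient.
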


\begin{proof} By~\cite[ Lemma 2.2]{Dunajski:2023xrd} there exists a smooth  function $\Gamma>0$ such that $\nabla_a K^a=0$, where $K$ is defined by \eqref{eq_Kdef} (see also~\cite{Lucietti:2012sf, Dunajski:2016rtx, tod_compact_1992}).  Proposition \ref{prop1} implies that $\rho>0$ everywhere on $M$.  With this choice of $\Gamma$,  the identity \eqref{eq_main_id} in Proposition \ref{prop2}  simplifies to
\begin{align}
	\frac{1}{4}|\mathcal{L}_Kg|^2 + 2 | \nabla (\Gamma \rho) |^2 &= \nabla^a\left(K^b\nabla_{(a}K_{b)}- \frac{1}{2}K_a\Delta\Gamma - \lambda\Gamma K_a + \Gamma \rho \nabla_a(\Gamma \rho)\right) \;. 
\end{align}
By integrating over $M$ and using Stokes' theorem, this implies $\mathcal{L}_K g=0$ and $\Gamma \rho$ is a constant everywhere on $M$.  The vector field $K$ cannot vanish identically, otherwise $X$ is a gradient. Hence, $K$ is a Killing vector field of $(M,g)$.  Constancy of $\Gamma \rho$ and \eqref{eq_extraterm}  imply that $\mathcal{L}_K\Gamma=0$ and hence $[K, X]=0$.  Furthermore, constancy of $\Gamma \rho$ and invariance of $\Gamma$ also implies $\mathcal{L}_K\rho=0$. Finally,  \eqref{eq_Maxwell_K_sol} reduces to $K^\flat= -\Gamma \star \Phi$, which implies  $\iota_K \Phi=0$, and hence from the definitions \eqref{eq_rhodef}, \eqref{eq_Phidef} we deduce that $\mathcal{L}_K \psi=\mathcal{L}_K \beta=0$.
\end{proof}

\begin{remark}
It is worth noting that the inheritance property $[K, X]=0$ follows more easily than in the vacuum case, which for $\lambda>0$ requires extra arguments~\cite{Dunajski:2023xrd, Colling:2024usk}.
\end{remark}

We conclude by demonstrating that the existence of a Killing field of the form just established has important topological restrictions.
\begin{cor}
\label{cor1}
A quasi-Einstein-Maxwell structure as in Theorem \ref{main-theorem} must be on  a sphere $M=S^2$. 
\end{cor}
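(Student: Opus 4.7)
The plan is to handle the non-vacuum case, where $\rho$ does not vanish identically, so by Proposition \ref{prop1} $\rho > 0$ on all of $M$; the vacuum case $\rho \equiv 0$ is already covered by the corresponding statement in \cite{Dunajski:2023xrd}. In the non-vacuum case, the strategy is to show that the Killing field $K$ of Theorem \ref{main-theorem-2} must vanish somewhere on $M$, and then to apply the Poincar\'e--Hopf theorem.

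The one external input needed is the fact --- established for the vacuum case in \cite{Jezierski:2009zz, Jezierski:2012lzy, Chrusciel:2017vie, Bahuaud:2023jyq} and alluded to in the introduction --- that the vector field $X$ must have a zero somewhere on the compact surface $M$. I expect these arguments to carry over with at most cosmetic changes to the Einstein-Maxwell setting. Let $p \in M$ be such a zero of $X$. From the definition $K = \Gamma X + \nabla\Gamma$ one has $K(p) = \nabla\Gamma(p)$. On the other hand, the proof of Theorem \ref{main-theorem-2} established $\mathcal{L}_K\Gamma = 0$, equivalently $g(K, \nabla\Gamma) \equiv 0$ on $M$. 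Evaluating this identity at $p$ yields $|\nabla\Gamma(p)|^2 = 0$, hence $\nabla\Gamma(p) = 0$ and so $K(p) = 0$. Moreover $K$ cannot vanish identically, for otherwise $X = -\nabla\log\Gamma$ would be a gradient, contradicting the hypothesis.

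To conclude, $K$ is a non-trivial Killing field on the compact oriented surface $M$ with at least one zero. Standard arguments --- resting on the observation that the linearisation of a Killing field at an isolated zero generates a one-parameter subgroup of $SO(T_pM)$, so that the field is locally conjugate to a rotation --- show that the zeros of such a field are isolated and each has Poincar\'e--Hopf index $+1$. The Poincar\'e--Hopf theorem then forces $\chi(M) \geq 1 > 0$, and the only compact oriented surface without boundary with positive Euler characteristic is $S^2$.

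The main obstacle is the first step: justifying that $X$ has a zero in the presence of the Maxwell source term $P_{ab}$. If the proofs of the cited vacuum-case references depend sensitively on the precise form of the quasi-Einstein equation without the source, a separate adaptation could be required; however, the remaining argument is then quite short, and the arguments in those papers (e.g.\ the analysis of $\nabla \cdot X$ on a compact surface) appear fairly insensitive to the extra pure-trace contribution to the Ricci tensor.
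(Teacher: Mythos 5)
Your proposal is correct and follows essentially the same route as the paper: zero of $X$ $\Rightarrow$ zero of $K$ via $g(K,\nabla\Gamma)=0$, then Poincar\'e--Hopf. The one step you leave as an expectation --- that $X$ must vanish somewhere in the Einstein--Maxwell case --- is verified in the paper for exactly the reason you anticipate: in two dimensions the source term $P_{ab}=\rho^2 g_{ab}$ is pure trace, so the trace-free part of the quasi-Einstein equation, and hence the Jezierski identity $\nabla_a\bigl(X^a/|X|^2\bigr)=-\tfrac{1}{2}$ on the set where $X\neq 0$, is unchanged, and integrating it over a compact $M$ rules out a nowhere-vanishing $X$.
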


\begin{proof}
The crucial observation is that $X$ must have a zero, a fact that has been previously established  in the vacuum case \cite[Lemma 3.2]{Bahuaud:2023jyq} (see also ~\cite{Jezierski:2009zz,Jezierski:2012lzy, Chrusciel:2017vie}).  This result remains valid in Einstein-Maxwell theory. Indeed, the identity $R_{ab} = \frac 12 R g_{ab}$ contracted twice with $X$ gives
\begin{equation} \label{divphi}
    X^a\nabla_a(\vert X \vert^2) = \vert X \vert^2\nabla_a X^a + \frac 12\vert X \vert^4.
\end{equation}
Following \cite{Jezierski:2009zz}, on any open set where $X$ does not vanish (\ref{divphi}) can be written as
\begin{equation}
    \nabla_a \left( \frac{X^a}{|X|^2} \right) = -\frac{1}{2} .  \label{eq_div_X}
\end{equation}
If $X$ is nowhere vanishing the vector field $X/|X|^2$ is globally defined and  \eqref{eq_div_X} is impossible on a compact manifold $M$ (without a boundary). Hence, $X$ must have a zero.

By Theorem \ref{main-theorem} there exists a function $\Gamma>0$ such that $K = \Gamma X +\nabla \Gamma$ is a Killing vector satisfying $K^a\nabla_a \Gamma = 0$. Then, at a point where $X$ vanishes, $K$ is orthogonal to itself and hence also vanishes. Since any zero of a Killing vector field on a surface must be isolated with index 1, the Poincar\'e-Hopf theorem implies that the Euler characteristic $\chi(M)$ is positive. Therefore, $M$ has to be $S^2$.
\end{proof}

\begin{remark}  
If $(M,g)$ is non-orientable, one can pull back a quasi-Einstein-Maxwell structure on $(M,g)$ to its oriented double-cover where Theorem \ref{main-theorem} applies.  Then, the push-forward of the Killing vector field $K$ on the covering space gives a Killing vector field on $(M,g)$ that leaves its quasi-Einstein-Maxwell structure invariant. In this case Corollary \ref{cor1} says that $M$ is $S^2$ or $\mathbb{R}P^2$. In fact, the extremal Kerr-Newman horizon on $S^2$ with vanishing magnetic charge descends to a solution on $\mathbb{R}P^2$.
\end{remark}

\begin{remark}\label{rem_alt}
Previously, it has been shown that in the vacuum case there are no non-trivial solutions to the quasi-Einstein equation (\ref{eq_NHG}) on higher genus surfaces~\cite{Dobkowski-Rylko:2018nti}. The proof of  Corollary \ref{cor1} also works in the vacuum case and therefore  gives an alternate proof of this result.   Conversely, one can adapt the method of~\cite{Dobkowski-Rylko:2018nti} to prove our corollary without invoking the existence of the Killing field in Theorem \ref{main-theorem}. 

To see this, consider $M$ as a Riemann surface using the complex structure induced by $g$ and set $F = \psi+ i\beta$.  Then, the Maxwell equation \eqref{eq_NH_Maxwell_2} written in any complex coordinate $z$ becomes  
\begin{equation}
    \partial_{\bar{z}} F - X_{\bar{z}} F = 0.
\end{equation}
By \cite[Lemma 1]{Dobkowski-Rylko:2018nti}, or Proposition \ref{prop1},  the function $F$ either vanishes identically or is nowhere vanishing. Therefore, if we assume the Maxwell field is non-trivial, $F$ is nowhere vanishing and one can show that $V = F^{-\frac 12}X^{(1,0)}$ is a globally defined holomorphic vector field by passing to a cover of $M$ if necessary (here $(1,0)$ refers to the decomposition $T M \otimes \mathbb{C} = T^{1,0} M \oplus T^{0,1} M$). Then, using existence properties of holomorphic vector fields on Riemann surfaces as in~\cite{Dobkowski-Rylko:2018nti}, or using that $X$ and hence $V$ must have a zero together with the fact that the zeros of holomorphic vector fields must be isolated with positive index, we deduce that $M=S^2$.
\end{remark}

\noindent{\bf Data availability.} Data sharing is not applicable to this article as no datasets were generated or analysed during the current study.\\

\noindent{\bf Competing interests.} The authors have no relevant financial or non-financial interests to disclose.

\printbibliography

\end{document}